\def\ep{\varepsilon}
\def\os{\varnothing}
\def\argmax{\text{argmax}}
\def\la{\lambda}
\def\da{\delta}
\def\Re{\mathbf{R}}
\newcommand{\df}[1]{\textit{#1}}
\newcommand{\ip}[1]{\langle #1 \rangle}
\newcommand{\abs}[1]{ \left | #1 \right | }
\newdimen\slantmathcorr
\def\oversl#1{
\setbox0=\hbox{$#1$}
\slantmathcorr=\wd0
\hskip 0.2\slantmathcorr \overline{\hbox to 0.8\wd0{%
\vphantom{\hbox{$#1$}}}}
\hskip-\wd0\hbox{$#1$}
}
\def\undersl#1{
\setbox0=\hbox{$#1$}
\slantmathcorr=\wd0
\underline{\hbox to 0.8\wd0{%
\vphantom{\hbox{$#1$}}}}
\hskip-0.8\wd0\hbox{$#1$}
}
\theoremstyle{plain}
\newtheorem{theorem}{Theorem}
\newtheorem{lemma}[theorem]{Lemma}
\newtheorem{corollary}[theorem]{Corollary}
\theoremstyle{definition}
\theoremstyle{remark}
\newtheorem{remark}[theorem]{Remark}
\newtheorem*{claim*}{Claim}
\newtheoremstyle{named}{}{}{\itshape}{}{\bfseries}{.}{.5em}{#1\thmnote{
    #3}}
\theoremstyle{named}
\newtheoremstyle{named2}{}{}{\itshape}{}{\bfseries}{:}{.5em}{#1\thmnote{
    #3}}
\theoremstyle{named2}
\begin{document}

\title{A characterization of combinatorial demand}


\author[Chambers]{Christopher P. Chambers}


\author[Echenique]{Federico Echenique}


\thanks{Chambers is at the Department of Economics, University of California
  at San Diego. Echenique is at the Division of the Humanities and
  Social Sciences, California Institute of Technology. We wish to
  thank Itai Sher, Jay Sethuraman, and Renato Paes Leme for
  discussions on an earlier draft.} 

\date{\today}

\begin{abstract}
We prove that  combinatorial demand functions are characterized by two
properties: continuity and the law of demand.
\end{abstract}

\maketitle

\section{Introduction}

We prove that  combinatorial demand functions are characterized by two
properties: continuity and the law of demand. Suppose given a finite
collection of items. We are interested in the demand for
packages, or bundles  of items. For each vector of item prices,
we are given a collection of demanded packages, and we want to know if
there exists a valuation function for packages such that the demanded
packages are optimal. Utility is quasilinear in money.
So the valuation has to be such that, for each
price vector, the demanded packages maximize the value of the
packages when one subtracts the sum of the prices for the items in
the package.

The two properties that characterize optimal combinatorial demand are
upper hemicontinuity and the law of demand. The continuity property
is technical, but familiar. The law of demand captures the economic
nature of our problem. Demand for a single item must ``slope down,''
meaning that higher prices correspond to smaller demands. For
combinatorial demand, the law of demand says that the change in demanded
items should have a negative value, when evaluated by the change in
prices. The law of demand  is an aggregate, or average, version of the downward
sloping demand property, and it has a long history in economics (see, for
example, \citet{samuelson1948foundations}). 

In addition to our characterization of combinatorial demand functions,
we show that utility functions are uniquely identified by
combinatorial demand. Specifically, we show that, up to an additive constant, a
unique monotone utility function can be backed out from demand behavior.

While very natural, our result appears to be new. A long literature
investigates the combinatorial demands that satisfy specific
behavioral properties, such as  gross substitutes:
\citet{murota2003discrete}, \citet{tamura2004applications},
and \citet{leme2014gross} survey the
literature. Our result is more basic, in that we seek to understand
optimal demand behavior alone, without additional behavioral properties.
\citet{brown2007nonparametric}
investigate a similar question to ours in the context of bundles of
infinitely divisible goods, but their result does not extend to
combinatorial demand.  \citet{sher2014identifying} show that
aggregate combinatorial demand genericall identifies individual
valuation functions. 
Finally, we should mention
the paper by \citet{baldwin2012tropical} which introduces a new
framework for the study of discrete demand, investigates many of
its properties, and their implications for markets for discrete goods. 

Our main result (Theorem~\ref{thm:main}) follows along the lines of
Rochet's approach to revealed preference theory (see
\citet{rochet}). The property of cyclic monotonicity is crucial to
obtain a rationalizing valuation. We 
use the results of \citet{lavi2003towards} or \citet{saks} (in a
version due to \citet{ashlagi}) to
establish that the law of demand is sufficient for cyclic
monotonicity. The main issue in adapting these various results to our
problem is that cyclic monotonicity is not enough to obtain a strict
rationalization: the difficulty is that one may add optimal packages
when  constructing the rationalization from cyclic monotonicity. The
crucial idea to overcome this difficulty is contained in Lemma~\ref{lem:spade}
in the proof.

Our result on identification (Theorem~\ref{thm:identify}) is
essentially an adaptation of Theorem 24.9 in
\citet{rockafellar1970convex}. 

\section{Results}

\subsection{Notation:}
Let $X$ be a finite set. Let $S$ be the set of all nonempty subsets
of $2^X$ (so the empty set is not in $S$, but $\{\os\}$ is).

We identify a set $A\subseteq X$ with its indicator function $1_A\in
\Re^X$. The inner product of a vector $p\in \Re^X$ and $1_A$ is
denoted by $\ip{p,A} = \sum_{x\in A} p_x$.

\subsection{Rationalizable demand}

A \df{demand function} is a function
$D:\Re^X_{++}\rightarrow S$  with the property that there is $\bar
p\in \Re^X_{++}$ such that $D(p) = \{ \os\}$ for all $p\geq \bar p$.

The relevant properties for a demand function are three:
A demand function $D$
\begin{itemize}
\item is \df{quasilinear rationalizable} if there exists
$v:2^X\rightarrow \Re$ such that \[
D(p) = \argmax\{ v(A) - \ip{p,A} :A\subseteq X \};
\]
\item satisfies the  \df{law of demand} if for all
$p,q\in \Re^X_{++}$, and all $A\in D(p)$ and $B\in D(q)$, \[
\ip{p - q, A-B}\leq 0;
\]
\item is \df{upper hemicontinuous} if,
for all $p\in\Re^X_{++}$, there is a neighborhood $V$ of $p$ such that
$D(q)\subseteq D(p)$ when $q\in V$.
\end{itemize}

  \begin{theorem}\label{thm:main}
    A demand function is quasilinear rationalizable iff it is upper
    hemicontinuous and  satisfies  the law of demand.
  \end{theorem}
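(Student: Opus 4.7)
The necessity direction is routine: if $D$ is rationalized by $v$, then for $A \in D(p)$ and $B \in D(q)$, adding the optimality inequalities $v(A) - \langle p,A\rangle \geq v(B) - \langle p,B\rangle$ and $v(B) - \langle q,B\rangle \geq v(A) - \langle q,A\rangle$ yields the law of demand directly; upper hemicontinuity in the stated form is the standard argmax property for a finite collection of payoff functions that are affine in $p$, since any $A \notin D(p)$ satisfies the maximizing inequality with strict slack which persists on a small neighborhood of $p$.

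For sufficiency, my first move would be to upgrade the law of demand---which is exactly the two-cycle case---to full cyclic monotonicity of the graph of $D$: for every cycle $p_1,A_1,\ldots,p_n,A_n$ with $A_i\in D(p_i)$ and indices taken mod $n$, $\sum_i\langle p_i,A_i-A_{i+1}\rangle\leq 0$. Since the price domain $\Re^X_{++}$ is convex, I would invoke the Saks--Yu theorem in the streamlined form due to Ashlagi et al., as the authors explicitly point to. Cyclic monotonicity then permits a Rochet-style path construction of a function $v:2^X\to\Re$ satisfying $v(A) - \langle p,A\rangle \geq v(B) - \langle p,B\rangle$ for every $A \in D(p)$ and every $B\subseteq X$; equivalently, $D(p) \subseteq D^*(p) := \argmax\{v(A) - \langle p,A\rangle : A\subseteq X\}$. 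Packages $A$ outside the image $\bigcup_p D(p)$ can be assigned very negative $v(A)$ so that they never enter $D^*$, using that $\os$ is demanded at high prices and therefore the maximum value is uniformly bounded below by $v(\os)$.

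The real work is the reverse inclusion $D^*(p) \subseteq D(p)$. The argmax correspondence partitions $\Re^X_{++}$ into convex polyhedral cells $R_A = \{p : A \in D^*(p)\}$. When $R_A$ has nonempty interior, the argument is clean: on $\mathrm{int}(R_A)$ the argmax is the singleton $\{A\}$, so the previous inclusion together with nonemptiness of $D$ forces $D(q) = \{A\}$ there; for any $p^* \in R_A$, I would take $q_n \to p^*$ with each $q_n$ in $\mathrm{int}(R_A)$, conclude $A \in D(q_n)$, and invoke the UHC condition $D(q_n)\subseteq D(p^*)$ to obtain $A \in D(p^*)$.

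The main obstacle is the degenerate case in which $R_A$ has empty interior: $A$ may then belong to $D^*(p)$ without ever being uniquely optimal in any neighborhood of $p$, and the previous trick stalls. This is precisely what Lemma~\ref{lem:spade} is designed to resolve. I would expect the lemma to exploit upper hemicontinuity together with the law of demand to show that $v$ can be modified---for instance by carefully lowering $v(A)$ for packages perpetually tied with a demanded alternative but not themselves demanded---so as to collapse any lower-dimensional cells without disturbing the weak rationalization established above. Carrying out such a modification consistently across all problematic packages and all prices, without introducing new ties or violating existing optimality relations, is where I expect the real subtlety of the argument to lie.
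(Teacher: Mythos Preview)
Through the Rochet construction your outline matches the paper: necessity is routine, the law of demand upgrades to cyclic monotonicity via Saks--Yu (the paper does a bit of work to pass from single-valued selections to the full correspondence, which you elide), and cyclic monotonicity yields $v$ with $D(p)\subseteq D^*(p)$. You also correctly locate the difficulty in the reverse inclusion. But your guess about Lemma~\ref{lem:spade} is wrong, and the fix you propose cannot work as stated: once packages outside the range have been suppressed, any remaining $A\in D^*(p)\setminus D(p)$ lies in the range, so $A\in D(q)$ for some $q$, and lowering $v(A)$ to break the tie at $p$ destroys the weak rationalization at $q$. The paper never modifies $v$ after constructing it, and your cell decomposition into full- and lower-dimensional regions plays no role in the argument.

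What Lemma~\ref{lem:spade} actually delivers is condition~$\spadesuit$, a statement purely about $D$: for every $p$ and every $B\notin D(p)$ there exist $A'\in D(p)$ and a price $p'$ with $A'\in D(p')$ and $\langle p',A'-B\rangle>\langle p,A'-B\rangle$. The price $p'$ is built from $p$ by raising coordinates in $\bigcup_{A'\in D(p)}(A'\setminus B)$ and lowering those in $\bigcup_{A'\in D(p)}(B\setminus A')$; upper hemicontinuity keeps the perturbation small enough that $D(p')\subseteq D(p)$. This condition feeds directly into the Rochet infimum: routing the path defining $v(B)$ through the pair $(p',A')$ gives
\[
v(B)\le \langle p',B-A'\rangle+v(A')<\langle p,B-A'\rangle+v(A'),
\]
whence $v(B)-\langle p,B\rangle<v(A')-\langle p,A'\rangle$ and $B\notin D^*(p)$. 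The strict inequality is produced inside the construction of $v$ itself, not patched in afterward.
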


A stronger condition places more restrictions on the rationalization.  We say a function $g:\Re_+^X\rightarrow\Re$ is \emph{monotone} if for all $x,y\in\Re_+^X$, $x\leq y$ (coordinatewise) implies $g(x)\leq g(y)$.
$D$ is \df{monotone, concave, quasilinear rationalizable (MCQ-rationalizable)} if there exists a monotone, concave $g:\Re_{+}^X\rightarrow\Re$ such that $v(A)=g(1_A)$, and \[
D(p) = \argmax\{ v(A) - \ip{p,A} :A\subseteq X \}.
\]

An easy corollary, demonstrated by our proof is the following:

\begin{corollary}If a demand function is quasilinear rationalizable, then it is MCQ-rationalizable.\end{corollary}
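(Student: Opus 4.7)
Plan: Given a rationalizing $v:2^X\to\Re$, my approach is to construct $g$ as a Fenchel-type biconjugate taken over \emph{nonnegative} supporting prices. Define
\[ v^*(p) := \max_{A\subseteq X}\{v(A)-\ip{p,A}\}, \qquad g(x) := \inf_{p\in\Re^X_{++}}\{v^*(p)+\ip{p,x}\} \]
for $x\in\Re_+^X$. Each map $x\mapsto v^*(p)+\ip{p,x}$ is affine and, because $p\geq 0$, monotone in $x$, so $g$ is automatically monotone and concave. Real-valuedness is easy: $v^*(p)\geq v(\os)$ (take $A=\os$ inside the max) bounds the infimum below, while any fixed $p$ bounds it above. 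The candidate valuation is $\tilde v(A):=g(1_A)$, and the only thing left is to verify $D(p)=\argmax_{A}\{\tilde v(A)-\ip{p,A}\}$.

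The forward inclusion is the easy half. From the definition of $g$ as an infimum, evaluating at $q=p$ gives $g(1_B)\leq v^*(p)+\ip{p,B}$ for every $B$, i.e.\ $\tilde v(B)-\ip{p,B}\leq v^*(p)$. Dually, setting $B=A$ inside the max gives $v^*(q)+\ip{q,1_A}\geq v(A)$ for every $q$, so $g(1_A)\geq v(A)$. Whenever $A\in D(p)$, we have $v^*(p)=v(A)-\ip{p,A}$, and combining with the upper bound forces $g(1_A)=v(A)$ and $\tilde v(A)-\ip{p,A}=v^*(p)$; hence $A$ is a maximizer of $\tilde v(\cdot)-\ip{p,\cdot}$.

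The reverse inclusion is the step I expect to be the main obstacle: a priori the biconjugate might raise $\tilde v$ strictly above $v$ at bundles that $v$ never demanded, creating spurious maximizers. Suppose $A\in\argmax_B\{\tilde v(B)-\ip{p,B}\}$. Chaining with any genuine $B^*\in D(p)$ gives $\tilde v(A)-\ip{p,A}\geq \tilde v(B^*)-\ip{p,B^*}=v(B^*)-\ip{p,B^*}=v^*(p)$, while the infimum upper bound simultaneously gives $\tilde v(A)-\ip{p,A}\leq v^*(p)$. Equality therefore holds, which says the infimum $g(1_A)=\inf_q\{v^*(q)+\ip{q,1_A}\}$ is attained at the interior point $p\in\Re^X_{++}$. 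By Fermat's rule applied to the convex function $q\mapsto v^*(q)+\ip{q,1_A}$, zero lies in its subdifferential at $p$; and since $\partial v^*(p)=-\con\{1_B:B\in D(p)\}$ (the standard subdifferential of a max of affines), this reads $1_A\in\con\{1_B:B\in D(p)\}$. The punch line is that $1_A$ is a vertex of the cube $[0,1]^X$ and the $1_B$ indexing $D(p)$ are also vertices, so the convex-combination identity forces $1_A=1_B$ for some $B\in D(p)$, i.e.\ $A\in D(p)$. This closes the loop and shows $g$ is a monotone concave rationalization.
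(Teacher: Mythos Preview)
Your argument is correct. The biconjugate $g$ is indeed monotone and concave, the forward inclusion is immediate, and your reverse-inclusion step is sound: equality $\tilde v(A)-\ip{p,A}=v^*(p)$ says the convex function $q\mapsto v^*(q)+\ip{q,1_A}$ attains its infimum at the point $p$ of the open set $\Re^X_{++}$, so $0\in\partial v^*(p)+1_A$; since $v^*$ is a finite max of affines, $\partial v^*(p)=-\con\{1_B:B\in D(p)\}$, and the cube-vertex observation (if $1_A=\sum\la_i 1_{B_i}$ with $\la_i>0$, then coordinatewise each $(1_A)_x\in\{0,1\}$ forces $B_i=A$) finishes it.

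The paper, however, does not prove the corollary this way. It obtains it as a byproduct of the sufficiency proof of Theorem~\ref{thm:main}: the Rochet-type valuation built there,
\[
v(A)=\inf\bigl\{\ip{p_1,A-A_1}+\cdots+\ip{p^*,A_k-\os}\bigr\},
\]
is already an infimum of affine functions of $1_A$ with strictly positive slope $p_1$, hence extends to a monotone concave $g$ on $\Re^X_+$. The paper then uses cyclic monotonicity (via the law of demand and the Saks--Yu/Ashlagi result) together with condition~$\spadesuit$ to get the strict inequalities needed for exact rationalization. Your route is more self-contained: it starts from \emph{any} rationalizing $v$, avoids cyclic monotonicity and $\spadesuit$ entirely, and replaces the $\spadesuit$ argument for the reverse inclusion by the extreme-point fact about cube vertices. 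The price is a bit more convex analysis (subdifferential of a max, Fermat's rule), but the payoff is a proof of the corollary that stands on its own, independent of the main theorem's machinery.
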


The corollary demonstrates that there is no additional empirical content delivered by the hypotheses of concavity and monotonicity.

\subsection{Identification}

Say that $v:2^X\rightarrow \Re$ \emph{(quasilinear) rationalizes} $D$
if for all $p\in\Re_{++}^X$, 
$D(p)=\arg\max_{A\in S} v(A)-\ip{p,A}$.  

If $v$ rationalizes $D$, then so does $v+c$ for any constant function
$c$. So one can only hope to obtain identification up to an additive
constant, and that is indeed what one obtains.

We say that $v$ is \emph{monotone} if for all $A,B\in S$, if
$A\subseteq B$, then $v(A)\leq v(B)$. It is easy to see (see
Remark~\ref{rmk:monotone} below) that if $D$ is quasilinear rationalizable then
there is a monotone $v$ that quasilinear rationalizes it.

\begin{theorem}\label{thm:identify}
  For any quasilinear rationalizable $D$, there is a unique monotone
  $v$ for which $v(\varnothing)=0$ which rationalizes $D$. 
\end{theorem}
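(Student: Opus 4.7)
Following the authors' hint, the plan is to adapt Rockafellar's Theorem~24.9 by passing through the indirect utility. Existence is immediate: Remark~\ref{rmk:monotone} furnishes a monotone rationalization, and subtracting the constant $v(\varnothing)$ produces one with $v(\varnothing)=0$; the substance is in uniqueness.

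Let $v_1$ and $v_2$ be two monotone rationalizations of $D$ with $v_i(\varnothing)=0$, and define $u_i(p)=\max_{A\subseteq X}\{v_i(A)-\ip{p,A}\}$, a finite, continuous, convex, piecewise-linear function on all of $\R^X$. At any $p\in\R^X_{++}$ the subdifferential $\partial u_i(p)$ equals $\con\{-\one_A:A\in D(p)\}$, so it depends only on $D$; hence $\partial u_1\equiv\partial u_2$ on the connected open set $\R^X_{++}$. By the local form of Rockafellar's Theorem~24.9 (two convex functions with identical subdifferentials on a connected open set differ there by a constant, because their difference is continuous with a.e.\ zero gradient), $u_1=u_2+c$ on $\R^X_{++}$. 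Evaluating at any $p\geq\bar p$, where $D(p)=\{\varnothing\}$ forces $u_i(p)=v_i(\varnothing)=0$, pins down $c=0$; continuity of each $u_i$ on $\R^X$ then extends the equality $u_1\equiv u_2$ to the closure $\R^X_+$.

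To recover $v_i$ from $u_i$, fix $A\subseteq X$, choose $M$ larger than both $v_1(X)$ and $v_2(X)$, and consider the boundary price $p^\star=M\one_{X\setminus A}\in\R^X_+$ (zero on $A$, equal to $M$ off $A$). Any $B$ with $B\not\subseteq A$ contributes $v_i(B)-\ip{p^\star,B}\leq v_i(X)-M<0$ to the maximum defining $u_i(p^\star)$, while $B=\varnothing$ contributes $0$; hence the maximum is attained over subsets of $A$, and by monotonicity $u_i(p^\star)=\max_{B\subseteq A}v_i(B)=v_i(A)$. The equality $u_1(p^\star)=u_2(p^\star)$ then yields $v_1(A)=v_2(A)$ for every $A$. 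The main obstacle is the subdifferential step---passing from equality of demand correspondences to equality of indirect utilities on $\R^X_{++}$; once that is done, monotonicity together with the explicit boundary evaluation at $p^\star$ makes the pointwise recovery of $v$ routine.
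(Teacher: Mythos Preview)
Your proof is correct and shares the paper's architecture: pass to the indirect utility $u_i$, observe that its subdifferential on $\R^X_{++}$ depends only on $D$, conclude $u_1=u_2+c$, pin down $c=0$ at large prices, and recover $v_i$ from $u_i$. Where the paper carries out the line-integral computation behind Rockafellar's result explicitly (integrating a selection $f(p)\in D(p)$ along segments and invoking Corollary~24.2.1), you simply cite the theorem; these are the same step.

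The substantive difference is in the recovery of $v$ from $u$. The paper proceeds in two stages: for $A$ in the range of $D$ it uses the conjugate identity $v(A)=\inf_p U_v(p)+\ip{p,A}$ (which holds with equality precisely because $A$ is demanded somewhere), and for $A$ outside the range it invokes Lemma~\ref{lem:range} together with monotonicity to write $v(A)=\sup\{v(B):B\subsetneq A,\ B\in D(\R^X_{++})\}$. Your route bypasses this split: you extend $u_i$ by continuity to the boundary of $\R^X_+$ and evaluate at the single price $p^\star=M\one_{X\setminus A}$, where monotonicity of $v_i$ forces $u_i(p^\star)=v_i(A)$ directly for every $A$. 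This is a bit cleaner---no case distinction, no appeal to Lemma~\ref{lem:range}---at the harmless cost of stepping outside the domain $\R^X_{++}$ on which $D$ is defined, which is fine because $u_i$ is globally piecewise linear.
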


\section{Proof of Theorem~\ref{thm:main}}

\begin{lemma}\label{lem:necessity}
  If $D$ is quasilinear rationalizable then it is upper hemicontinuous
  and satisfies the law of demand
\end{lemma}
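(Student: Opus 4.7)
The plan is to handle the two properties separately, each by a short direct argument from the definition of quasilinear rationalizability. Both will follow essentially from the finiteness of $2^X$ and the continuity of the affine functions $p\mapsto v(A)-\ip{p,A}$, so I expect neither step to present a real obstacle.

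For the law of demand, I would fix $p,q\in\Re_{++}^X$ with $A\in D(p)$ and $B\in D(q)$, and apply the two revealed-preference inequalities coming from the rationalization. Since $A$ maximizes $v(\cdot)-\ip{p,\cdot}$, comparing $A$ with $B$ yields $v(A)-v(B)\geq \ip{p,A-B}$; symmetrically, optimality of $B$ at prices $q$ gives $v(A)-v(B)\leq \ip{q,A-B}$. Subtracting one inequality from the other produces $\ip{p-q,A-B}\leq 0$, which is exactly the law of demand.

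For upper hemicontinuity, fix $p\in\Re_{++}^X$ and consider any $A\subseteq X$ with $A\notin D(p)$. Let $M(p):=\max_{B\subseteq X}\bigl(v(B)-\ip{p,B}\bigr)$; this is the pointwise maximum of finitely many affine, hence continuous, functions of prices, so $M$ itself is continuous. By assumption $v(A)-\ip{p,A}<M(p)$, so by continuity there is an open neighborhood $V_A$ of $p$ on which the strict inequality $v(A)-\ip{q,A}<M(q)$ persists; on $V_A$ we have $A\notin D(q)$. Since there are only finitely many subsets of $X$, I would intersect the neighborhoods $V_A$ over the finite collection $\{A\subseteq X:A\notin D(p)\}$ to obtain a single open neighborhood $V$ of $p$ on which every $A\notin D(p)$ fails to be in $D(q)$. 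Equivalently $D(q)\subseteq D(p)$ for every $q\in V$, which is upper hemicontinuity.

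The only mild subtlety worth flagging is that the argument must be phrased so that it covers the empty package as well: $\os$ lies in every $D(p)$ or not according to whether $v(\os)\geq v(B)-\ip{p,B}$ for all $B$, and this inclusion is handled uniformly by the same inequality against $M(p)$, so no separate treatment is needed. With that observation in place the two bullet points above constitute the full proof.
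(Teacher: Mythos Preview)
Your proof is correct and follows essentially the same approach as the paper's: the law of demand comes from adding the two revealed-preference inequalities, and upper hemicontinuity comes from the continuity of the finitely many affine maps $q\mapsto v(A)-\ip{q,A}$ together with the finiteness of $2^X$. The only cosmetic difference is that the paper picks a uniform $\varepsilon$-gap first and then a single ball, whereas you find a neighborhood for each non-optimal $A$ and intersect; these are equivalent uses of finiteness.
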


\begin{proof}
Let $v$ rationalize $B$. Let $u(p) = \max\{v(A) -\ip{p,A}:A\subseteq X\}$.

First we show that $D$ is upper hemicontinuous.
Since $X$ is finite, there is $\ep>0$ such
that $u(p)-(v(B')-\ip{p,B'}) > \ep$ for all $B'\notin D(p)$. Let $V$
be a ball with center $p$ and radius small enough that for all $q\in
V$, and all $B'\notin D(p)$,  $u(q)-(v(B')-\ip{q,B'}) > \ep$. Then
$D(q)\subseteq D(p)$ for all $q\in V$.

Second we show the law of demand. Let $A\in D(p)$ and $A'\in
D(p')$. Then
$v(A) -\ip{p,A}\geq v(A') -\ip{p,A'}$ and
$v(A') -\ip{p',A'}\geq v(A) -\ip{p',A}$. Adding these two inequalities
and rearranging yields $\ip{p-p',A-A'}\leq 0$.
\end{proof}

Lemma~\ref{lem:necessity} establishes the necessity direction in the
theorem. We now turn to sufficiency. The upper hemicontinuity of $D$
implies the following property:
A demand function $D$ satisfies condition $\spadesuit$ if for all $p$ and
$B\notin D(p)$ there is $A\in D(p)$ and  $p'$ such that  $A\in D(p')$ and
  $\ip{p',A-B} >\ip{p,A-B}$.

\begin{lemma}\label{lem:spade}
  If $D$ is upper hemicontinuous, then it satisfies condition $\spadesuit$.
\end{lemma}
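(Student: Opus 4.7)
The plan is to exploit the combinatorial $\{0,1\}$ structure of packages together with upper hemicontinuity. Given $p$ and $B\notin D(p)$, I will first find a direction $d$ in price space along which every $A\in D(p)$ strictly beats $B$ in the sense $\ip{d,1_A-1_B}>0$, and then perturb $p$ slightly in the direction $d$; upper hemicontinuity will automatically supply an $A\in D(p')$ that does the job.

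Step one is the key observation: the point $1_B$ is a vertex of the cube $\{0,1\}^X$, hence an extreme point, and so cannot be written as a convex combination of other vertices of the cube. Since $B\notin D(p)$ gives $1_B\neq 1_A$ for every $A\in D(p)$, we conclude $1_B\notin \con\{1_A:A\in D(p)\}$. Applying the strict separating hyperplane theorem to $1_B$ and the compact convex polytope $\con\{1_A:A\in D(p)\}$ produces a vector $d\in\Re^X$ with $\ip{d,1_A-1_B}>0$ for every $A\in D(p)$.

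Step two invokes upper hemicontinuity: let $V$ be a neighborhood of $p$ such that $D(q)\subseteq D(p)$ for all $q\in V$. Since $p\in\Re^X_{++}$ is interior, I can choose $t>0$ so small that $p':=p+td$ lies in $V\cap\Re^X_{++}$. Then $D(p')$ is nonempty (by the definition of a demand function) and $D(p')\subseteq D(p)$, so any $A\in D(p')$ automatically belongs to $D(p)$ and satisfies $\ip{p'-p,A-B}=t\ip{d,1_A-1_B}>0$, which is exactly condition $\spadesuit$.

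The main obstacle is the separation step. Upper hemicontinuity by itself only tells us that $D$ is contained in $D(p)$ on a neighborhood; it gives no control over which particular $A$ survives along any chosen ray. The leverage comes from the combinatorial fact that a non-demanded $B$ is always \emph{strictly} separated from $\con(D(p))$, so there is a single direction $d$ that works uniformly for all $A\in D(p)$, and then whichever $A\in D(p')$ emerges on the perturbed price is automatically compatible with $d$.
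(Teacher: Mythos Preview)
Your argument is correct, and the overall structure---find a single direction $d$ with $\ip{d,1_A-1_B}>0$ for every $A\in D(p)$, perturb $p'=p+td$ inside the upper-hemicontinuity neighborhood, and then pick any $A\in D(p')\subseteq D(p)$---is exactly the skeleton the paper uses as well. The genuine difference is in how the direction $d$ is obtained. The paper builds $d$ explicitly: it sets $W=\bigcup_{A'\in D(p)}(A'\setminus B)$ and $E=\bigcup_{A'\in D(p)}(B\setminus A')$, takes $d=\lambda 1_W-\lambda' 1_E$, and verifies by a short case analysis (using that $A'\neq B$ forces either $A'\setminus B\neq\os$ or $B\setminus A'\neq\os$) that $\ip{d,1_{A'}-1_B}>0$ for all $A'\in D(p)$ once $\lambda/\lambda'$ is large enough. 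You instead observe that $1_B$, being a vertex of $[0,1]^X$, is an extreme point and hence cannot lie in $\con\{1_A:A\in D(p)\}$; strict separation then hands you $d$ for free.

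Your route is shorter and more conceptual: the single line ``$1_B$ is extreme, hence not in the convex hull of the other vertices'' replaces the paper's case analysis and the calibration of $\lambda,\lambda'$. The paper's route, on the other hand, is fully constructive---one can read off an explicit $p'$---and avoids invoking the separating-hyperplane theorem. Either way, the final perturbation step and the appeal to upper hemicontinuity are identical.
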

\begin{proof} Let $p\in\Re^X_{++}$ and $B\notin D(p)$. Let $V$ be a
  neighborhood of $p$ as in the definition of upper hemicontinuity.  So
  $D(q)\subseteq D(p)$ for all $q\in V$.

Let $W=\cup_{A'\in D(p)} (A'\setminus B)$ and $E=\cup_{A'\in D(p)}
(B\setminus A')$. Note that
\begin{equation}
  \label{eq:3}
 (B\setminus A') \cup (A'\setminus B) \neq \os
\end{equation}
for each $A'\in D(p)$.

Let $\la,\la'>0$. By definition of $W$, $\ip{1_W,B}=0$. So
\[\ip{\la  1_W- \la' 1_E ,A'-B} = \la \ip{  1_W,A'}  - \la' \ip{ 1_E  ,A'}
+ \la' \ip{ 1_E  ,B}.
\]
Then for each  $A'\in D(p)$, \eqref{eq:3} implies that $\ip{  1_W,A'}\neq 0$ or $\ip{ 1_E
  ,B}\neq 0$, or both. Moreover, if $\ip{  1_W,A'} = 0$ then it must
be true that  $A'\subsetneq B$, which implies that
\begin{equation}
  \label{eq:1}
 -\ip{1_E, A' } + \ip{1_E, B } = \ip{1_E, B-A' } > 0.
\end{equation}

Choose $\la,\la'>0$ such that $\la \ip{  1_W,A'}  - \la' \ip{ 1_E  ,A'}
+ \la' \ip{ 1_E  ,B} >0$ for all $A'\in D(p)$. This is possible by
equation~\eqref{eq:1}, and for example by letting
$\la/\la'>\abs{X}$. Also choose $\la,\la'$  such that
 $p'=p+ (\la  1_W- \la' 1_E)\in V$.

Now, for any $A'\in D(p')$, \[\ip{p',A'-B}-\ip{p,A'-B} = \ip{(\la  1_W-
  \la' 1_E),A'-B} >0.\] Moreover, $A'\in  D(p)$, as $p'\in V$ and thus
$D(p')\subseteq D(p)$.
\end{proof}

A demand function satisfies \df{cyclic monotonicity} if, for all $n$,
and using summation mod $n$,
\[
\sum_{i=1}^n \ip{p_i , A_i - A_{i+1} }\leq 0, \] where $A_i\in
D(p_i)$, for all sequences $\{p_i\}_{i=1}^n$.

The following argument is mostly standard, adapting the construction of \citet{rockafellar} and \citet{rochet}.  A potential novelty is the use of the upper hemicontinuity condition in guaranteeing strict inequalities when necessary.

\begin{lemma}
  If $D$ satisfies cyclic monotonicity, and condition $\spadesuit$,
  then it is quasilinear rationalizable.
\end{lemma}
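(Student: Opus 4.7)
The plan is to follow the Rockafellar--Rochet chain construction, adapted to the concave (law-of-demand) setting, with condition $\spadesuit$ supplying the strict-maximizer information that cyclic monotonicity alone does not give.

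\emph{Step 1 (Construction).} For $A$ in the image of $D$, define
\[
v(A)=\sup\left\{\sum_{i=1}^{n}\ip{p_i,A_i-A_{i-1}}:\ n\geq 0,\ A_0=\varnothing,\ A_n=A,\ A_i\in D(p_i)\text{ for }i=1,\ldots,n\right\},
\]
with the empty chain ($n=0$) contributing $0$. Extend $v$ to every $B\subseteq X$ by $v(B)=\inf\{v(A)+\ip{p,B-A}: A\in D(p),\ p\in\R_{++}^X\}$.

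\emph{Step 2 (Finiteness and $v(\varnothing)=0$).} Fix $p_0\in\R_{++}^X$ with $D(p_0)=\{\varnothing\}$, which exists by the standing assumption on $D$. For any chain as in Step~1, the sequence $(p_0,\varnothing),(p_n,A_n),(p_{n-1},A_{n-1}),\ldots,(p_1,A_1)$ traversed cyclically is a valid cycle for $D$, and cyclic monotonicity applied to it rearranges to
\[
\sum_{i=1}^n\ip{p_i,A_i-A_{i-1}}\leq\ip{p_0,A}.
\]
Hence $v(A)\leq\ip{p_0,A}<\infty$; applying this with $A=\varnothing$ forces $v(\varnothing)=0$. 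A parallel single-step chain bound $v(A)\geq\ip{p,A}$ valid for any $A\in D(p)$ shows that the infimum defining $v(B)$ off the image is also finite.

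\emph{Step 3 (Weak rationalization).} I would then verify that $v(A)\geq v(B)+\ip{p,A-B}$ for every $A\in D(p)$ and every $B\subseteq X$. If $B$ lies in the image, any chain witnessing the supremum for $v(B)$ can be extended by the additional step $B\to A$ at price $p$ (using $A\in D(p)$), and passing to the supremum yields the inequality. If $B$ lies outside the image, the inequality is immediate from the defining infimum for $v(B)$ applied at $(A,p)$.

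\emph{Step 4 (Strict rationalization via $\spadesuit$).} The remaining inclusion $\argmax_B\{v(B)-\ip{p,B}\}\subseteq D(p)$ is the main obstacle and is exactly where condition $\spadesuit$ is indispensable. Suppose for contradiction that $A\notin D(p)$ yet $A$ attains the maximum of $v(\cdot)-\ip{p,\cdot}$. Invoke $\spadesuit$ to obtain $A'\in D(p)$ and $p'$ with $A'\in D(p')$ and $\ip{p',A'-A}>\ip{p,A'-A}$. Step~3 applied at $(A',p')$ gives $v(A')-v(A)\geq\ip{p',A'-A}$, while the maximizer assumption on $A$ gives $v(A')-v(A)\leq\ip{p,A'-A}$; together with the strict inequality from $\spadesuit$, these three are incompatible. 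Since Step~3 already shows $D(p)\subseteq\argmax_B\{v(B)-\ip{p,B}\}$, the two inclusions combine to give $D(p)=\argmax_B\{v(B)-\ip{p,B}\}$, completing the rationalization.
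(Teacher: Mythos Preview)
Your argument is correct and follows the same Rockafellar--Rochet template as the paper, including the identical use of condition~$\spadesuit$ to upgrade the weak rationalization to a strict one. The only noteworthy difference is cosmetic: the paper defines $v$ for \emph{all} $A\subseteq X$ in one stroke as the infimum
\[
v(A)=\inf\Big\{\ip{p_1,A-A_1}+\ip{p_2,A_1-A_2}+\cdots+\ip{p^*,A_k-\varnothing}:A_i\in D(p_i)\Big\},
\]
whereas you first build $v$ on the image of $D$ via a supremum over chains and then extend it off the image by an affine infimum. Both constructions produce a rationalizing $v$; the paper's one-shot infimum has the minor advantage that $v$ is automatically the restriction of a concave, nondecreasing function on $\Re^X$, which is what the paper uses for the MCQ-rationalizability corollary. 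Your two-step construction would require a short additional argument to recover concavity, but that is not needed for the lemma itself.
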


\begin{proof}We have assumed that there is $p^*$ for which $\{\varnothing\}=D(p^*)$.  For any $A\subseteq X$, define:

\[v(A)=\inf \ip{p_1,A-A_1} + \ldots + \ip{p^*,A_k-\varnothing},\]

where the infimum is taken over all finite sequences $(p_i,A_i)_{i=1}^k$ for which $A_i \in D(p_i)$.

Observe that by cyclic monotonicity, $v(\varnothing)\in \Re$; in fact $v(\varnothing)\geq 0$.  By construction, $v$ is nondecreasing, as it is the lower envelope of nondecreasing functions.  Hence $v(A)\in\Re$ for all $A$.  Finally, observe that $v$ is the lower envelope of restriction of affine functions on $\Re^X$.  Conclude that $v$ is the restriction of a concave function on $\Re^X$.

Finally, observe by construction that if $A\in D(p)$, then for any $B\subseteq X$,

\[v(B) \leq \ip{p,B-A} + v(A),\] from which we obtain $v(A)-\ip{p,A}\geq v(B) - \ip{p,B}$.

Finally, to prove the lemma we need to show that if in addition $B\notin D(p)$
then $v(A)-\ip{p,A} > v(B)-\ip{p,B}$, or that $v(A) > \ip{p, A -
  B} + v(B)$.   By condition $\spadesuit$, there is
$A'\in D(p)$ and  $p'$ such that  $A'\in D(p')$ and
  $\ip{p',A'-B} >\ip{p,A'-B}$.

Suppose that $\{(A_i,p_i)\}$ is a sequence as
in the definition of $v(A')$.
Then
\[
v(B) \leq \ip{p', B-A' } + \sum_{i=1}^n \ip{p_i , A_i - A_{i+1} }
< \ip{p, B-A' }  + \sum_{i=1}^n \ip{p_i , A_i - A_{i+1} },\]
so $v(B)< \ip{p, B-A' } + v(A')$; and thus \[
v(A)-  \ip{p,A} = v(A')-  \ip{p,A'}  > v(B)-  \ip{p',B} .
\]
\end{proof}

We finish the proof by using a recent result in the mechanism design
literature, establishing conditions under which monotonicity (a
condition that coincides with the law of demand) implies cyclic
monotonicity: see \citet{lavi2003towards} and \citet{saks}.

\begin{lemma}
  A demand function satisfies cyclic monotonicity if it satisfies the
  law of demand.
\end{lemma}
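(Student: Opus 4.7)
The law of demand, read carefully, is exactly the $2$-cycle form of cyclic monotonicity: for any $p,q\in \Re^X_{++}$ and any $A\in D(p)$, $B\in D(q)$, one has $\ip{p-q, A-B}\leq 0$. This is what the mechanism design literature calls weak monotonicity. The task is therefore to upgrade weak monotonicity of the (set-valued) map $D$ to cyclic monotonicity of every length $n$.

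The plan is to invoke the Saks--Yu theorem, in the form proved for convex type spaces by \citet{ashlagi} (and first isolated in \citet{lavi2003towards}): if the type/price domain is convex and the range is finite, then weak monotonicity implies full cyclic monotonicity. Both hypotheses are immediate in our setting. The price domain $\Re^X_{++}$ is convex; and since $X$ is finite, $D$ necessarily takes values in subsets of the finite set $2^X$, so effectively only finitely many allocations can ever arise.

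To reconcile our correspondence-valued $D$ with the function-valued formulation of the cited theorem, I would fix an arbitrary cycle $(p_i,A_i)_{i=1}^n$ with $A_i\in D(p_i)$ and define a selection $f\colon \Re^X_{++}\to 2^X$ satisfying $f(p)\in D(p)$ for every $p$ and $f(p_i)=A_i$ for each $i$. The law of demand, applied pointwise to $f$, gives weak monotonicity of $f$ on its full convex domain. The Saks--Yu/Ashlagi theorem then yields cyclic monotonicity of $f$, and in particular the desired inequality
\[
\sum_{i=1}^n \ip{p_i, A_i - A_{i+1}}\leq 0.
\]
Since the cycle was arbitrary, $D$ itself is cyclically monotone.

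The main obstacle is the Saks--Yu step itself: one must break an arbitrary cycle into many short sub-cycles along line segments in the convex domain, exploiting that only finitely many allocations are visited along any such segment, and use weak monotonicity on each short piece to control the total sum. This is the delicate path-shortening argument carried out in \citet{saks} and extended to convex type spaces in \citet{ashlagi}, and I would appeal to it rather than reprove it here.
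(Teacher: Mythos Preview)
Your approach is the same as the paper's, and the heavy lifting is indeed done by the Saks--Yu/Ashlagi result. But there is a genuine gap in the step where you ``fix an arbitrary cycle $(p_i,A_i)_{i=1}^n$ with $A_i\in D(p_i)$ and define a selection $f$ satisfying $f(p_i)=A_i$ for each $i$.'' Such a selection exists only if the prices $p_1,\ldots,p_n$ are pairwise distinct; if $p_i=p_j$ with $A_i\neq A_j$, no function $f$ can satisfy both $f(p_i)=A_i$ and $f(p_j)=A_j$. Cyclic monotonicity as defined allows repeated prices, so this case cannot simply be ignored.

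The paper closes this gap with a minimality argument: assume cyclic monotonicity fails, pick a violating cycle of minimal length $n$ (necessarily $n\geq 3$ since the law of demand rules out $n\leq 2$), and show that any repetition $p_i=p_j$ would let you splice the cycle into two shorter ones, at least one of which still violates the inequality, contradicting minimality. Once the $p_i$ are distinct, your selection argument goes through verbatim. (A secondary technical point: the Ashlagi et al.\ statement is phrased for probability-measure-valued allocations, so a small embedding is needed to fit the $2^X$-valued setting; the paper handles this by rescaling indicator vectors and padding with a dummy coordinate.)
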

\begin{proof}

So let $D$ satisfy the law of demand and suppose towards a
contradiction that there is a sequence $(p_i,A_i)_{i=1}^n$, with
$A_i\in D(p_i)$ and $\sum_{i=1}^n\ip{p_i, A_i - A_{i+1}} > 0$
(summation mod $n$), but no such sequence with $n\leq 2$.  Choose
such a sequence with minimal $n$, and observe that $n\geq 3$.

  For any selection $f(p)\in D(p)$, if $f$ is monotone then it is
  cyclically monotone, see \emph{e.g.} \citet{saks} or
  \citet{ashlagi}, Theorem S.7 in the supplementary
  material.\footnote{Technically, the \citet{ashlagi} result requires
    the output of $f$ to be a probability measure.  To modify the
    construction to fit our environment, simply let $y^*\not\in X$,
    and consider the set $Y\subseteq \Re^{X\cup\{y^*\}}$ given by
    $Y=\{(p,0):p\in \Re^X_{++}\}$.  Define the function
    $f^*:Y\rightarrow \Delta(X\cup\{y^*\})$ by
    $f^*(p,0)(x)=\frac{1_{x\in f(p)}}{|X|}$ and
    $f^*(p,0)(y^*)=1-\frac{|f(p)|}{|X|}$.  Observe that
    $\ip{(q,0),f^*(p,0)} = \ip{q,f(p)}\frac{1}{|x|}$, and therefore
    monotonicity
    of $f$ is equivalent to that of $f^*$ and cyclic monotonicity of
    $f$ is equivalent to that of $f^*$.} Since $D$ satisfies the law
  of
  demand, any selection $f$ is monotone, and therefore cyclically
  monotone.

If $p_i\neq p_j$ for all $i,j=1,\ldots,n$ with $i\neq j$, then we can choose a
selection $f$ of $D$ with $f(p_i) = A_i$, violating cyclic monotonicity of $f$, and hence contradicting the fact that it is monotone.

We now claim that in fact it is the case that $p_i \neq p_j$ for all $i\neq j$.

Observe first that if $p_i = p_{i+1}$ for some $i$, then
$\ip{p_i A_i- A_{i+1}} + \ip{p_{i+1} A_{i+1} - A_{i+2}}=
\ip{p_i A_i - A_{i+2}}$, implying the existence of a shorter
sequence, a contradiction.

Suppose then that $p_i = p_j$. By the preceding, we know that $j=i+1$ is false, and $i=j+1$ is false. Without loss, suppose that $i=1$.  Then $j\neq n$ and $j\neq 2$.  Further,
$\ip{p_j , A_j- A_{j+1}} =  \ip{p_j, A_j- A_{1}} + \ip{p_1, A_1-
  A_{j+1}}$, so
\[
\begin{split}
0< \sum_{i=1}^n\ip{p_i, A_i - A_{i+1}}  =
\ip{p_1, A_1- A_{2}} + \cdots +
\ip{p_j, A_j- A_{1}} + \ip{p_1, A_1- A_{j+1}} \\
+ \ip{p_{j+1}, A_{j+1}- A_{j+2}} + \cdots + \ip{p_n , A_1 - A_n}.
\end{split}
\]
Consequently, either $\ip{p_1, A_1- A_{2}} + \cdots +
\ip{p_j, A_j- A_{1}} > 0$ or $\ip{p_1, A_1- A_{j+1}} + \cdots
+ \ip{p_{j+1}, A_{j+1}- A_{j+2}} + \cdots + \ip{p_n , A_1 - A_n}>0$.
In either case, we have demonstrated the existence of a shorter cycle
violating cyclic monotonicity, a contradiction.

\end{proof}

\section{Proof of Theorem~\ref{thm:identify}}

As usual, $D(\Re^X_{++})=\bigcup_{p\in\Re^X_{++}}D(p)$ is the \df{range}
of $D$, and similarly for a function $f:\Re_{++}^X\rightarrow 2^X$.

First, we characterize those bundles which are demanded at some set of prices.

\begin{lemma}\label{lem:range}Given $v$ which rationalizes $D$, $A\in D(\Re^X_{++})$ iff for all $B\subset A$, $B\neq A$, $v(B)<v(A)$.\end{lemma}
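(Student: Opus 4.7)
The plan is to prove both directions directly, with the forward direction essentially immediate and the reverse direction handled by explicitly constructing a price vector.

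For the forward direction, suppose $A \in D(p)$ for some $p \in \Re^X_{++}$. Then for every $B \subseteq X$ we have $v(A) - \ip{p,A} \geq v(B) - \ip{p,B}$, so $v(A) - v(B) \geq \ip{p, A-B}$. When $B \subsetneq A$, the difference $A - B$ (as indicator functions) is the indicator of $A \setminus B \neq \os$, and since every coordinate of $p$ is strictly positive, $\ip{p, A-B} = \ip{p, A \setminus B} > 0$. Hence $v(A) > v(B)$.

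For the reverse direction, suppose $v(B) < v(A)$ for every $B \subsetneq A$. I will construct $p \in \Re^X_{++}$ such that $A$ is the unique maximizer of $v(\cdot) - \ip{p,\cdot}$ over $2^X$, which forces $A \in D(p)$. Let $\delta = \min_{B \subsetneq A}(v(A) - v(B)) > 0$. Choose $\ep > 0$ small enough that $|A|\ep < \delta$, and choose $M > 0$ large enough that $M > v(C) - v(A) + |A|\ep$ for every $C \subseteq X$ with $C \not\subseteq A$. Define $p_x = \ep$ for $x \in A$ and $p_x = M$ for $x \notin A$, so that $p \in \Re^X_{++}$.

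It remains to verify that $A$ is the unique maximizer. For any $C \subsetneq A$, the requirement $v(A) - |A|\ep > v(C) - |C|\ep$ reduces to $v(A) - v(C) > (|A|-|C|)\ep$, which holds by the choice of $\ep$ since $(|A|-|C|)\ep \leq |A|\ep < \delta \leq v(A) - v(C)$. For any $C$ with $C \not\subseteq A$, we have $|C \setminus A| \geq 1$, and the requirement $v(A) - |A|\ep > v(C) - |C \cap A|\ep - |C \setminus A| M$ is implied by $M > v(C) - v(A) + |A|\ep$, which holds by the choice of $M$. Thus $\{A\} = \arg\max\{v(C) - \ip{p,C} : C \subseteq X\}$, so $A \in D(\Re^X_{++})$. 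There is no serious obstacle here; the only thing to be careful about is ordering the choice of $\ep$ before $M$, since the lower bound on $M$ depends on $\ep$.
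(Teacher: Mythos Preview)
Your proof is correct and follows exactly the same approach as the paper's: the forward direction uses optimality at $p$ together with strict positivity of $p$ on $A\setminus B$, and the reverse direction constructs a price vector that is small on $A$ and large off $A$. The only difference is that you spell out the explicit choice of $\ep$ and $M$ and verify the inequalities, whereas the paper simply asserts that such a choice works; your version is a more detailed execution of the same idea.
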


\begin{proof}First suppose that $A\in D(p)$ for some $p$.  Then for all $B\subset A$, $B\neq A$, $v(A)-\ip{p,A}\geq v(B)-\ip{p,B}$, implying $v(A)\geq v(B)+\ip{p,A-B}>v(B)$.

Conversely, suppose that for all $B\subseteq A$, $B\neq A$, we have $v(B)<v(A)$.  We want to show that there is $p^*$ for which $v(A)-\ip{p^*,A}\geq v(B)-\ip{p^*,B}$.  To ensure this inequality is satisfied, we simply choose $p^*(x)$ small for $x\in A$ and large for $x\notin A$. \end{proof}

\begin{remark}\label{rmk:monotone}
By Lemma~\ref{lem:range}, it is straightforward to see that for any $h:S\rightarrow \Re$, there is a monotone $\overline{h}$ which rationalizes the same demand as $h$, namely, $\overline{h}$ is the smallest monotone function pointwise dominating $h$: \[\overline{h}(A)=\sup_{B\subseteq A}h(B).\]
\end{remark}

\subsection{Proof of Theorem~\ref{thm:identify}}
We proceed to show that if $v$ and $w$ are monotone and both
rationalize $D$, they differ by a constant. 

For all $p\in\Re^X_{++}$, let $f(p)\in D(p)$.
First we show that any two functions that rationalize $D$ must differ
by a constant on the range of $f$.
So let $v$ rationalize $D$.
  Define $U_v(p)=\sup_{A\in S}v(A)-\ip{p,A}$ (the indirect utility function).
  Observe that by definition, $U_v(p)\geq v(A)-\ip{p,A}$ for all
  $(p,A)$, and $A\in D(p)$ iff $v(A)-\ip{p,A}=U(p)$.  Consequently,
  $v(A)\leq U_v(p)+\ip{p,A}$ for all $(p,A)$, with equality iff $A\in
  D(p)$.  It follows that $A\in D(p)$ iff for all $q$,
  $U_v(p)+\ip{p,A}\leq U_v(q)+\ip{q,A}$; \emph{i.e.} $U_v(q)\geq
  U_v(p)+\ip{p-q,A}$; or $U_v(q)\geq U_v(p)+\ip{q-p,-A}$.  In other
  words, $-A$ is a subgradient of $U_v$ at $p$ iff $A\in D(p)$. In
  particular, $f(p)$ is a subgradient of $U_v$.

Observe that $U_v$ is convex, real-valued, and continuous, and defined on
$\Re^X_{++}$, an open domain.  Since $U_v$ is convex, for any $x_1,x_2$, the function
$h(\lambda)=U_v(\lambda x_2 + (1-\lambda)x_1)$ is convex.  Since $f$ is a subgradient of $U_v$ at each $p$, we
obtain that  \begin{align*}
  h(\lambda + \da) - h(\la)  & =
U_v(\lambda x_2 + (1-\lambda)x_1+\da (x_2 - x_1) ) - U_v(\lambda x_2 +
                               (1-\lambda)x_1  ) \\
&\geq \ip{ f(\lambda x_2 + (1-\lambda)x_1) , \da (x_2 - x_1) } \\
& = \da \ip{ f(\lambda x_2 + (1-\lambda)x_1) , (x_2 - x_1) }.
\end{align*}
Hence, $(h(\lambda + \da) - h(\la)  )/\da \geq \ip{ f(\lambda x_2 +
  (1-\lambda)x_1) , (x_2 - x_1) }$ if $\da>0$ and $(h(\lambda + \da) -
h(\la)  )/\da \leq \ip{ f(\lambda x_2 +   (1-\lambda)x_1) , (x_2 -
  x_1) }$ if $\da<0$. Thus,
 $h'_-(p)\leq f(p)\cdot (x_2-x_1)\leq h'_+(p)$

Observe also that $\int_0^1 h'_-(x)dx=\int_0^1 h'_+(x) dx$, by
\citet{rockafellar1970convex} 
Corollary 24.2.1.   So in particular $U_v(x_2)=U_v(x_1)+\int_0^1
f(p)\cdot (x_2-x_1)$. Recall that $f$ is an arbitrary selection from
$D$ and does not depend on $v$.
Suppose that $w$ rationalizes $D$, and define $U_w$ analogously to
$U_v$. Then we obtain that
\[U_v(x_2) - U_v(x_1) = \int_0^1 f(p)\cdot (x_2-x_1) = U_w(x_2) -
U_w(x_1).\] Thus
$U_v = U_w + c$, for some constant $c$.

Now for all $A\in f(\Re^X_{++})$,
$v(A)=\inf_{p}U_v(p)+\ip{p,A} = \inf_{p}U_w(p)+\ip{p,A} +c = w(A)
+c$. So we have shown that $v$ and $w$ differ at most by a constant on
the range of $f$. Again, since $f$ was arbitrary, $v$ and $w$ differ
at most by a constant on the range of $D$.

Recalling that there is always $p^*\in\Re^X_{++}$ for which $\varnothing\in D(p^*)$, by Lemma~\ref{lem:range} and monotonicity of $v$, for any $A$ not in the range of $D$, we have \[v(A)=\sup_{\{B\subset A: B\in D(\Re^X_{++})\}}v(B).\]  A similar equality holds for $w$, hence $v$ and $w$ differ by a constant.

Therefore there is a unique monotone $v$ with $v(\varnothing)=0$ which
rationalizes $D$.   

\bibliographystyle{ecta}
\bibliography{discretedemand}
\end{document}